\DeclareMathOperator*{\minimize}{minimize}
\DeclareMathOperator*{\argmin}{arg\,min}
\newtheorem{assumption}{Assumption}
\newtheorem{theorem}{Theorem}
\newtheorem{proposition}{Proposition}
\newlength\myindent
\def\BibTeX{{\rm B\kern-.05em{\sc i\kern-.025em b}\kern-.08em
    T\kern-.1667em\lower.7ex\hbox{E}\kern-.125emX}}
\title{\LARGE \bf
Efficient, Decentralized, and Collaborative Multi-Robot Exploration using Optimal Transport Theory
}
\author{Rabiul Hasan Kabir$^{1}$ and Kooktae Lee$^{1}$
\thanks{$^{1}$R. H. Kabir and K. Lee are with the Department of Mechanical Engineering, New Mexico Institute of Mining and Technology, Socorro, NM 87801, USA.
        {\tt\small rabiul.kabir@student.nmt.edu, kooktae.lee@nmt.edu}}%
}
\begin{document}

\maketitle
\thispagestyle{empty}
\pagestyle{empty}

\begin{abstract}
An Optimal Transport (OT)-based decentralized collaborative multi-robot exploration strategy is proposed in this paper. This method is to achieve an efficient exploration with a predefined priority in the given domain. In this context, the efficiency indicates how a team of robots (agents) cover the domain reflecting the corresponding priority map (or degrees of importance) in the domain. The decentralized exploration implies that each agent carries out their exploration task independently in the absence of any supervisory agent/computer. When an agent encounters another agent within a communication range, each agent receives the information about which areas are already covered by other agents, yielding a collaborative exploration. The OT theory is employed to quantify the difference between the distribution formed by the robot trajectories and the given reference spatial distribution indicating the priority. A computationally feasible way is developed to measure the performance of the proposed exploration scheme. Further, the formal algorithm is provided for the efficient, decentralized, and collaborative exploration plan.
Simulation results are presented to validate the proposed methods.
\end{abstract}

\section{\uppercase{Introduction}}
A multi-robot exploration problem has been both widely and deeply investigated for more than decades due to the obvious reasons - less prone to failure than a single-robot system as well as time reduction to cover a given domain. Although there exist numerous research works related to multi-robot explorations, it can be categorized into three different fields -- Coverage Path Planning, Multi-Robot Exploration and Search, and Ergodic Exploration.

Coverage Path Planning (CPP) refers to a method to synthesize a robot path for passing over all points of an area or volume of interest. Some previous works for the multi-robot CPP problem include multi-robot lawnmower \cite{azpurua2018multi}, cell decomposition technique \cite{xu2014efficient}, \cite{avellar2015multi}, spanning tree-based CPP \cite{kim2014time}, Vornoi Diagram method \cite{yazici2013dynamic}, \cite{adaldo2017cooperative}. Incremental random planners such as Rapidly exploring Random Trees (RRT) and Probabilistic Road Map (PRM) are also in the category of CPP, which has very broad research works.

Multi-Robot Exploration and Search is for either finding a moving target in an indoor environment based on the Bayesian measurement update model \cite{hollinger2009efficient}, \cite{best2018online} or searching targets using Particle Swarm Optimization (PSO)-based approaches \cite{pugh2007inspiring}, \cite{kwa2020optimal}.

All previous works mentioned above, however, have only focused on the entire coverage of the given domain while not taking into account relative importance or priority of areas of interest, making the existing methods far from efficient exploration.

In \cite{GM-IM:11}, Mathew and Mezi{\'c} addressed a multi-robot exploration problem based on the ergodicity. In general, the ergodicity refers to system characteristics such that the time-averaged dynamics are equal to the given spatial average. In this work, a metric is defined to measure the ergodicity as the difference between the time-averaged multi-robot trajectory and the given spatial distribution. The Fourier basis functions are employed to facilitate the derivation of the ergodic control laws. This method has been further investigated and applied to many other works including \cite{silverman2013optimal},
\cite{lee2018receding},  
\cite{veitch2019ergodic}.

All of these works rely on the proposed result in \cite{GM-IM:11}, yet it contains the following issues.
The proposed result is developed for the centralized control scheme, which may not be desirable in practice. A computational issue arises in the implementation stage due to infinite numbers of the Fourier basis functions being used in the method. 
Finally, and most importantly, the ergodicity can be only achieved with infinite time, which is the fundamental limitation of the ergodic approach. This problem is fatal as robots have finite energy and hence, the ergodicity will never be attained in reality.

In this paper, we propose an efficient, decentralized, and collaborative multi-robot exploration scheme based on the optimal transport theory. To quantify the difference between the distribution obtained from multi-robot trajectories and a given spatial distribution, the OT theory is employed.
In \cite{kabir2020receding}, a preliminary result was introduced for an efficient single-robot exploration plan. This work has laid the foundation and opened up the possibility to generate an efficient robot trajectory based on the OT theory. This preliminary work, however, was developed for a single robot and did not consider the majority of research works investigated in this paper such as multi-robot trajectory generation, non-overlapping issues between multiple robots, and a decentralized control scheme. 

The major contributions of this paper are as follows. Firstly, an efficient multi-robot exploration plan is proposed to reflect priority of areas in the domain, given as a spatial reference distribution. Secondly, the proposed method is developed for decentralized exploration, which is more practical than a centralized scheme. Thirdly, through the proposed plan a multi-robot system is able to collaboratively complete an exploration mission, resulting in a faster coverage compared to a single-agent scenario.
Lastly, an upper bound of the performance measure for the exploration efficiency is derived. This bound can be calculated in a computationally efficient manner.
To validate the proposed method, simulation results are provided.


\textit{Notation:} A set of real and natural numbers are denoted by $\mathbb{R}$ and $\mathbb{N}$, respectively. Further, $\mathbb{N}_0 = \mathbb{N}\cup\{0\}$. The symbols $\Vert \cdot \Vert$ and $^{T}$, respectively, denote the Euclidean norm and the transpose operator. The symbol $\mathcal{R}(x,r)$ represents a set of points within the circle centered at $x$ with a radius $r$. The symbol $\#$ indicates the cardinality of a given set. The variable $t\in\mathbb{N}_0$ is used to denote a discrete time.

\section{\uppercase{Preliminary and Problem Description}}

In many practical scenarios, a domain associated with  different degrees of priority is necessary for efficient explorations. In this case, a team of robot can be deployed to explore the domain such that they investigate high-priority regions more frequently while spend less time in low-priority regions.

This study utilizes the OT theory as a tool to achieve this goal.
Traditionally, the optimal transport is to seek an optimal solution for a resource allocation problem \cite{villani2008optimal}. Among many different problem formulations based on the OT theory,
the Wasserstein distance \cite{villani2008optimal} of order $p$ is introduced as follows.
\begin{itemize}
\item Wasserstein distance:
\begin{align*}
W_p(\mu,\nu) := \left(\inf_{\gamma\in\Gamma(\mu,\nu)}\int_{X\times Y} \Vert x - y\Vert^p d\gamma(x,y)\right)^{\frac{1}{p}},
\end{align*}
\end{itemize}
The Wasserstein distance describes the least amount of effort to transform one distribution $\mu$ into another one $\nu$. This Wasserstein distance has been employed to broad dynamical systems including system analysis \cite{lee2014probabilistic}, \cite{lee2015performance}, \cite{lee2015analysis} as well as controller synthesis \cite{lee2014optimal}, \cite{lee2018optimal} problems.

In the discrete marginal case, the Hitchcock-Koopmans transportation problem \cite{evans1997partial} is developed for the optimal transport problem, where $\mu$ and $\nu$ are represented by particles.
The following linear programming (LP) formulation of the transportation problem is equivalent to the Wasserstein distance in the sample point representation of given distributions.
\begin{itemize}
\item Linear Programming problem: (for $p=1$) 
\begin{equation}\label{eqn: LP}
  \begin{aligned}
    & \underset{\pi_{ij}}{\text{minimize}} & & \sum_{i,j}\pi_{ij}\Vert x_i-y_j \Vert\\
    & \text{subject to} & & \pi_{ij} \geq 0,\\
   	& & & \sum_{j=1}^{N}\pi_{ij} = m(x_i), i=1,2,\ldots,M,\\
	& & & \sum_{i=1}^{M}\pi_{ij} = n(y_j), j=1,2,\ldots,N,
\end{aligned}
\end{equation}
\end{itemize}
where $\{x_i\}_{i=1}^{M}$ and $\{y_j\}_{j=1}^{N}$ are the set of sample points for two ensembles, $M$ and $N$ are respectively the number of sample points for $\{x_i\}$ and $\{y_j\}$, $m(x_i)$ and $n(y_j)$ are some non-negative constants representing the mass or weight corresponding to each particle in the ensemble. The variable $\pi_{ij}$ denotes the transport plan indicating how much mass transportation is required from $x_i$ to $y_j$.
The optimal transport plan $\pi_{ij}^*$ aims to determine an optimal solution for the minimum effort that is necessary to transport the weights. 

\begin{figure*}[t]
\begin{center}
\subfloat[given spatial distribution]{\includegraphics[scale=0.4]{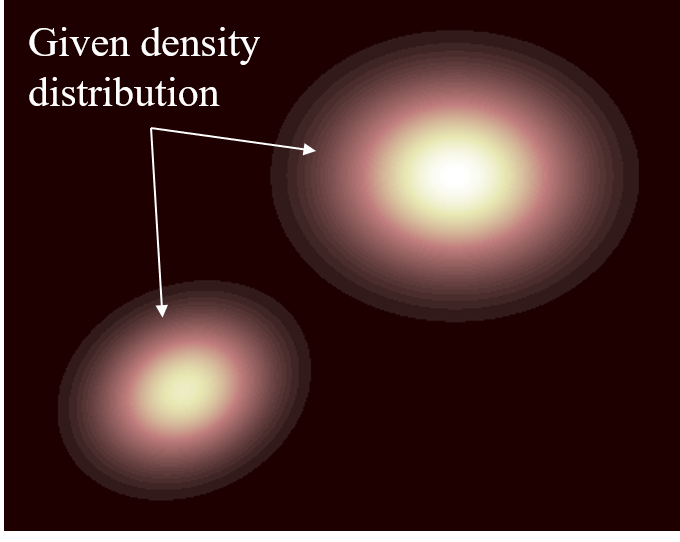}}\qquad
\subfloat[sampling]{\includegraphics[scale=0.4]{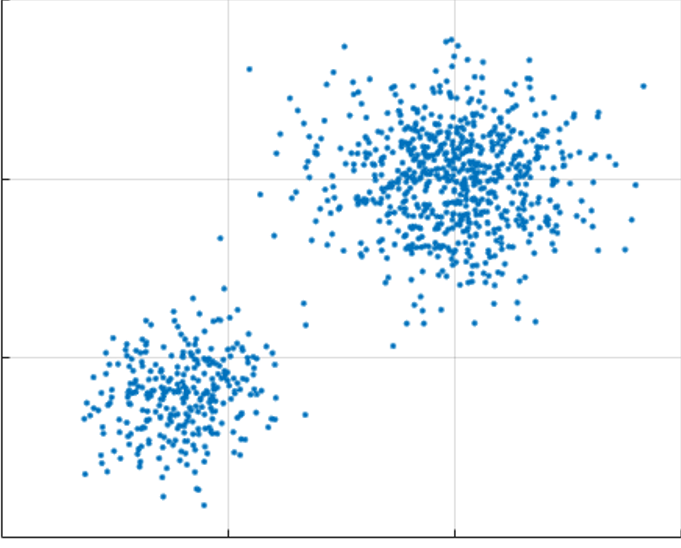}}\qquad
\subfloat[efficient robot exploration ]{\includegraphics[scale=0.4]{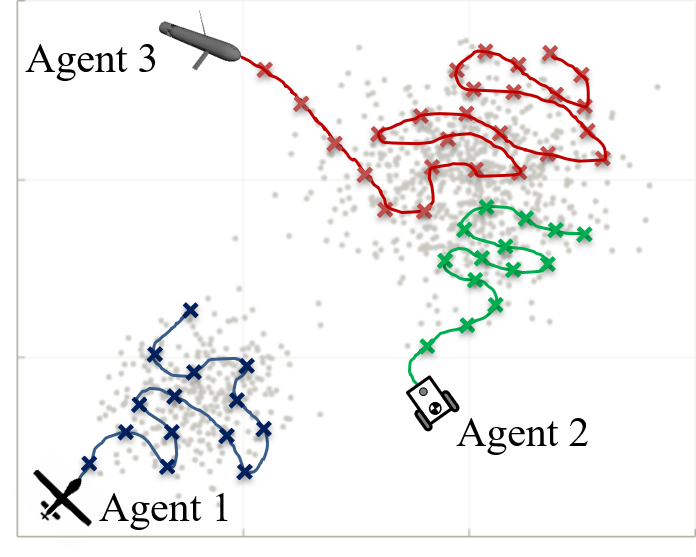}}
\end{center}
\caption{The procedure to generate the efficient, decentralized, and collaborative multi-robot trajectory using the OT theory}\label{fig: ergodic traj procedure}
\end{figure*}

In the decentralized multi-robot trajectory generation problem, the set of robot points $\{x_i^k\}$, where $k\in\{1,2,\ldots,n_a\}$ is the agent number and $n_a$ is the total number of agents, are not predetermined and hence, one needs to develop a strategy on how to obtain $\{x_i^k\}$. The Wasserstein distance in the LP form \eqref{eqn: LP} will be employed as a tool to measure the difference between the two ensembles, one from the robot trajectories, $\{x_i^k\}$, and another from the given reference distribution, $\{y_i\}$.
Therefore, the major goal of this research is to plan  the multi-robot trajectories such that the set $\{x_i\} := \bigcup_{k=1}^{n_a}\{x_i^k\}$ gets close to $\{y_i\}$, resulting in the efficient multi-robot exploration.
Mathematically, it is equivalent to generate robot trajectories $\{x_i\}$ such that $\displaystyle \sum_{i=1}\sum_{j=1}\pi_{ij}\Vert x_i - y_j\Vert \rightarrow 0$, with the given constraints in \eqref{eqn: LP}. 

Fig. \ref{fig: ergodic traj procedure} illustrates the schematic of the problem. The given spatial distribution (Fig. \ref{fig: ergodic traj procedure} (a)), is transformed into a sample point representation (Fig. \ref{fig: ergodic traj procedure} (b)), followed by generating the multi-agent trajectories to match  $\{x_i\}$ and $\{y_j\}$ (Fig. \ref{fig: ergodic traj procedure} (c)).

One simple and naive way to achieve this goal is making $\{x_i\}$ identical to $\{y_j\}$. However, this approach is impractical due to the following reasons:
\begin{enumerate}
\item[1)] Due to the motion constraints, the robots may not visit the sample point $y_j$.
\item[2)] As there exists an energy limitation for each robot, the total number of robot points denoted by $M$ is also limited and finite. Hence, $M$ may be smaller than the total number of sample points given by $N$.
\item[3)] For $M\neq N$, it is not possible to match the robot points with the sample points.
\item[4)] Even for $M=N$, where $N$ is a very large number, it may take an excessive amount of time for robots to survey the domain while following the trajectory generated by connecting all the sample points sequentially. 
\end{enumerate}
In the sequel, the OT-based decentralized collaborative multi-agent exploration scheme is provided to ensure efficient exploration of a given domain while avoiding these issues stated above.
\section{\uppercase{OT-based Decentralized Multi-Agent Exploration}}\label{section 3}

This section provides a key idea for the efficient, decentralized, and collaborative multi-robot exploration based on the OT theory. 
It is easier to implement single agent exploration strategy to explore a domain due to the absence of complexities associated to multi-agent systems, for example, the coordination between agents, information sharing, collision avoidance between agents, etc. However, a single-agent exploration scheme is inefficient for applications with very spacious domain.
Therefore, to maximize the exploration efficiency, it is better to utilize a team of agents instead of a single agent.
The centralized case can be thought of as a multi-agent system with the assumption that there exists a supervisory agent that receives all relevant information from each subordinate agent and shares the information with all other agents, enabling all agent to realize a coordinated exploration plan. This scenario is effective and applicable only if there are no communication interruptions between all agents, which is very restrictive in practice.
Moreover, a centralized control strategy is more vulnerable to a single point of failure (i.e., a breakdown of the supervisory agent will lead to the failure of the whole system).

To avoid the aforementioned issues associated with the centralized control approach, the decentralized collaborative multi-agent exploration scheme is developed here while considering a limited communication range. The decentralized control implies that each agent performs the given exploration task independently without the knowledge of what other agents are doing. The collaboration means that the agents can cover the domain much faster than a single agent case and hence, effectively by communicating and sharing their information with other agents if they are within the communication range. Thus, the decentralized/collaborative exploration strategy can facilitate the multi-agent exploration while avoiding the issues from the single agent as well as the centralized multi-agent cases.

Given $n_a\in\mathbb{N}$ numbers of agents, the exploration planner must reflect each robot's energy level as it is finite. In the OT-based plan, the finite energy can be transformed into the total number of robot points, $M$, for each agent. For the given number of robot points $M\in\mathbb{N}$, all the points are equally weighted by $m(x_i^k)$, where $m(x_i^k) = \frac{1}{M}$, $\forall i$ and $x_i^k$ denotes the position of agent $k$ at discrete time $t\in\mathbb{N}_0$. Similarly, the given spatial distribution can be represented by $N\in\mathbb{N}$ numbers of sample points and each sample point $y_j$ has a uniform weight in the beginning, given by $n_0^k(y_j)=\frac{1}{N}$. Here, the weight $n_t^k(y_j)$ of a sample point $y_j$ decreases with time as the robot explores the domain, thus making the weight $n_t^k(y_j)$ a function of time $t$. Notice that the agents perform the exploration task in a decentralized manner and hence, the weight information for the sample points available to the agent $k$, $n_t^k(y_j)$, will differ from other agents' weight information. If one agent share the information on $n_t^k(y_j)$ with other agents within the communication range, then it will be unified by a certain rule, which will be explained later. The sample point positions $\{y_j\}_{j=1}^{N}$ are, however, assumed to be identical across all agents initially.


We consider that at the beginning of the exploration ($t=0$), the robot points for the agent $k$ are all accumulated at the initial robot position $x^k_0$. (There are a total of $M$ robot points, which are not determined yet initially.) As the robot updates its position from $x^k_0$ to $x^k_1$ in the next time step, the weight assigned for the new position $x^k_1$ becomes $m(x^k_1) = \frac{1}{M}$. All the remaining weights $\frac{M-1}{M}$ for future positions $\{x^k_i\}_{i=2}^{M}$, which are yet to be determined, are moved with the agent and these future positions are considered to be concentrated on the current robot position $x^k_1$. To generalize this description, the following is proposed.

\begin{assumption}\label{assump: remaining weight}
Given the number of robot points $M$ for the agent $k$, the weight for each point is given by $\frac{1}{M}$. 
For any time $t\in\mathbb{N}_0$, the past robot points $\{x_i^k\}_{i=1}^{t}$ possess a total of weights given by $\sum_{i=1}^{t}\left(\frac{1}{M}\right) = \frac{t}{M}$.
The undetermined future robot positions $\{x^k_i\}_{i=t+1}^{M}$ are all accumulated at the current robot position, $x_{t}$, which has remaining weights $\sum_{i=t+1}^{M}\left(\frac{1}{M}\right) = \frac{M-t}{M}$.
\end{assumption}

We provide the schematic of the proposed method to realize the efficient exploration in Fig. \ref{fig: layer}.
\begin{figure}[h!]
\begin{center}
\includegraphics[scale=0.4]{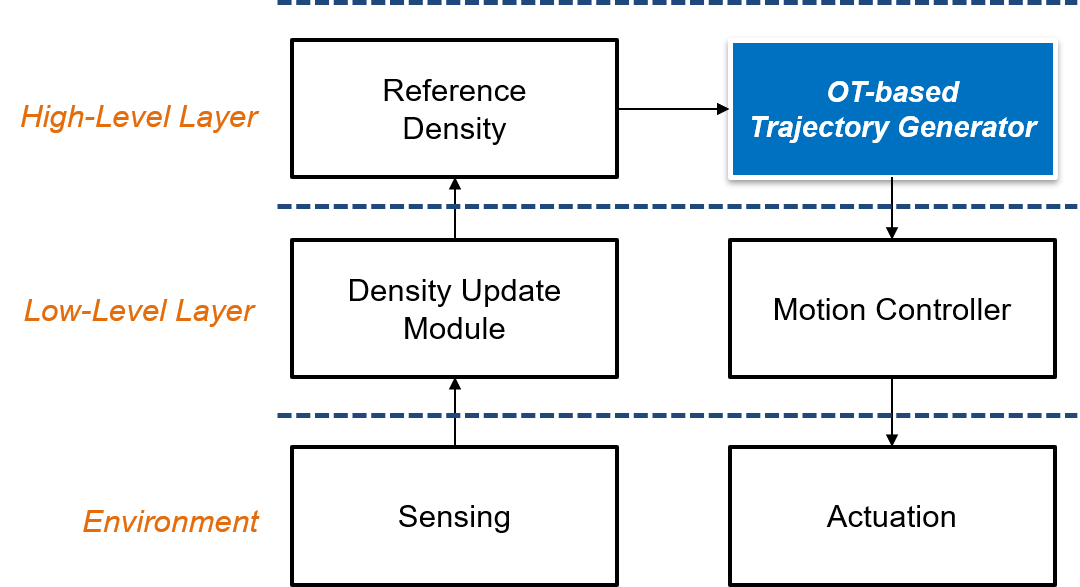}
\caption{Schematic of multiple layers with optimal transport (OT)-based trajectory generator placed on high-level layer}\label{fig: layer}
\end{center}
\end{figure}
The primary focus of this paper is to develop the OT-based trajectory generator in the high-level layer, which is the main contribution of this work. This goal is achieved by providing the information about reference distribution to the OT-based Trajectory Generator, which generates trajectories for each agent to follow. The motion controller in the low-level layer is decoupled from the trajectory generator, thus enabling the proposed method to be applicable to heterogeneous robot platforms. Since the developed method is not platform-specific, the efficient exploration can be achieved in collaboration between various robot such as unmanned aerial vehicles, ground robots, and unmanned underwater vehicles. As the agents explore areas of interest, they obtain data for an environment using on-board sensors. The density update module in the low-level layer receives the measured data and update the reference distribution (density) accordingly. 
Finally, the information about the reference distribution in the high-level layer will be updated through the density update module.

The optimal transport problem focuses on determining the non-negative optimal transport plan $\pi_{ij}^{k\star}$ for the given Euclidean distance $\Vert x_i^k - y_j\Vert$. Unlike conventional optimal transport problems in the LP form \eqref{eqn: LP}, the efficient, decentralized, and collaborative multi-agent exploration problem contains two parameters, $\pi^k_{ij}$ and $x_i^k$, both as the decision variables. 
This renders the efficient robot exploration problem much more difficult than the LP problem. In what follows, we introduce a two-stage approach to tackle this problem.


\subsection{Methodology: A Two-Stage Approach}

The developed method consists of two steps: the next goal point determination stage in a receding-horizon fashion and the weight update stage. 
To determine the next goal point for the agent $k$ to visit, an agent considers a fixed number of sample points within a certain range. Then, the agent computes a feasible future trajectory by connecting the sample points with non-negative weights within the range. The first sample point of that trajectory is considered as the next goal point and the agent moves towards that point using its motion controller. Once reached a new position, the agent updates weights of all sample points. In this weight update stage, the agent distributes $\frac{1}{M}$ of mass to the sample points with non-negative weight that are located nearby. The agent trajectory is governed by performing these two operations in every time step until weights of all sample points are completely depleted. More details about this process is provided below. 

\subsubsection{Next goal point ($^{g}x_{t+1}^k$) determination stage}

At any given discrete-time step $t$, if an agent $k$ is located at $x_t^k$, the next goal position ${^{g}x}_{t+1}^k$ for this agent can be computed by the following steps. The agent selects $h\in\mathbb{N}$ numbers of sample points $y_j$ by generating a circle with the center at the current robot position $x_t^k$ and an initial radius of $r_0$. The radius is incrementally increased by $\delta$ until the agent detects $h$ numbers of sample points within the circular search area. Once these points are found, all possible trajectories are generated by connecting all the sample points in the circle starting from $x_t$, as depicted in Fig. \ref{fig: next x} (a).

\begin{figure}[t!]
\begin{center}
\subfloat[]{\includegraphics[scale=0.2]{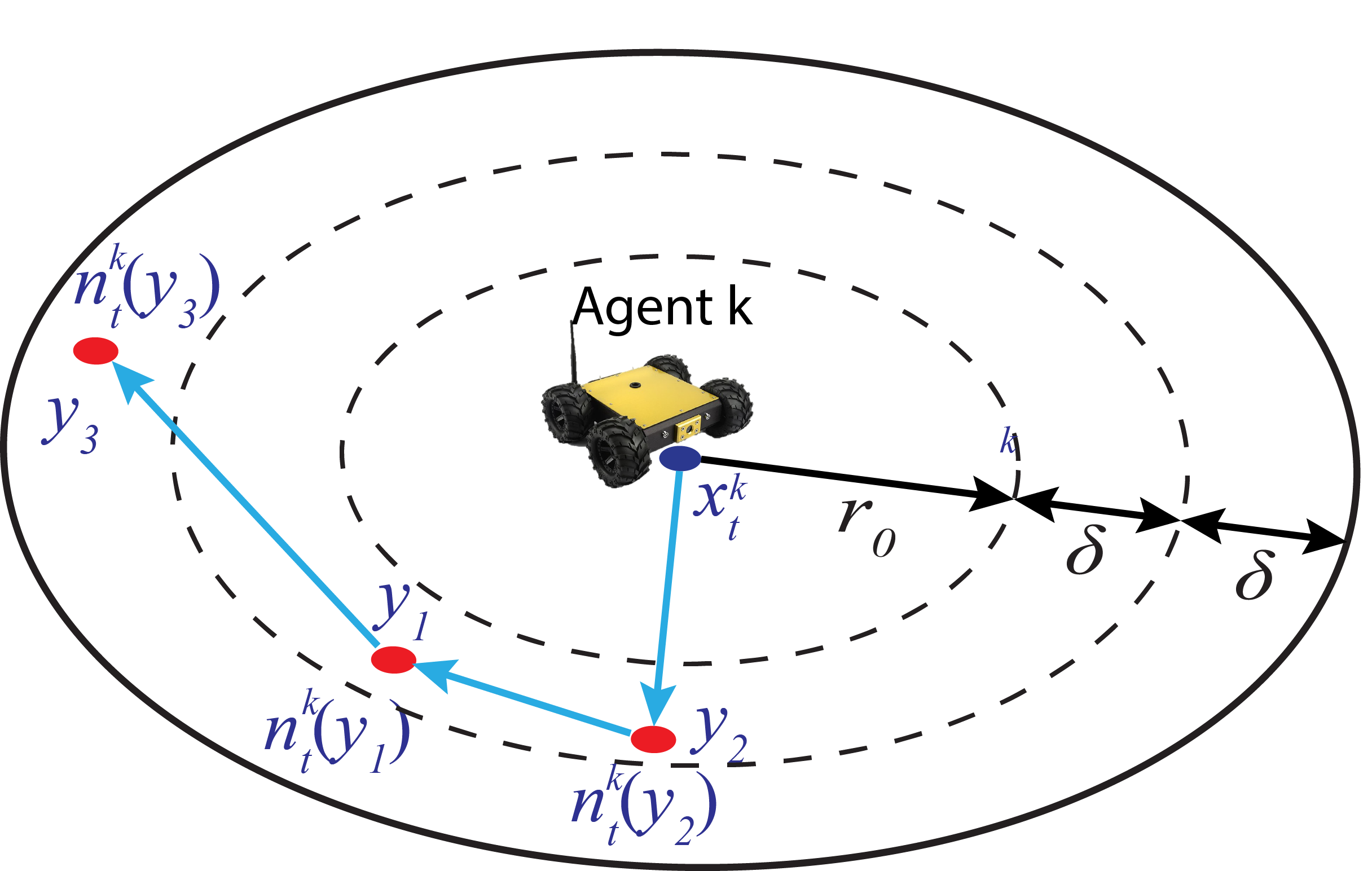}}\,\,
\subfloat[]{\includegraphics[scale=0.12]{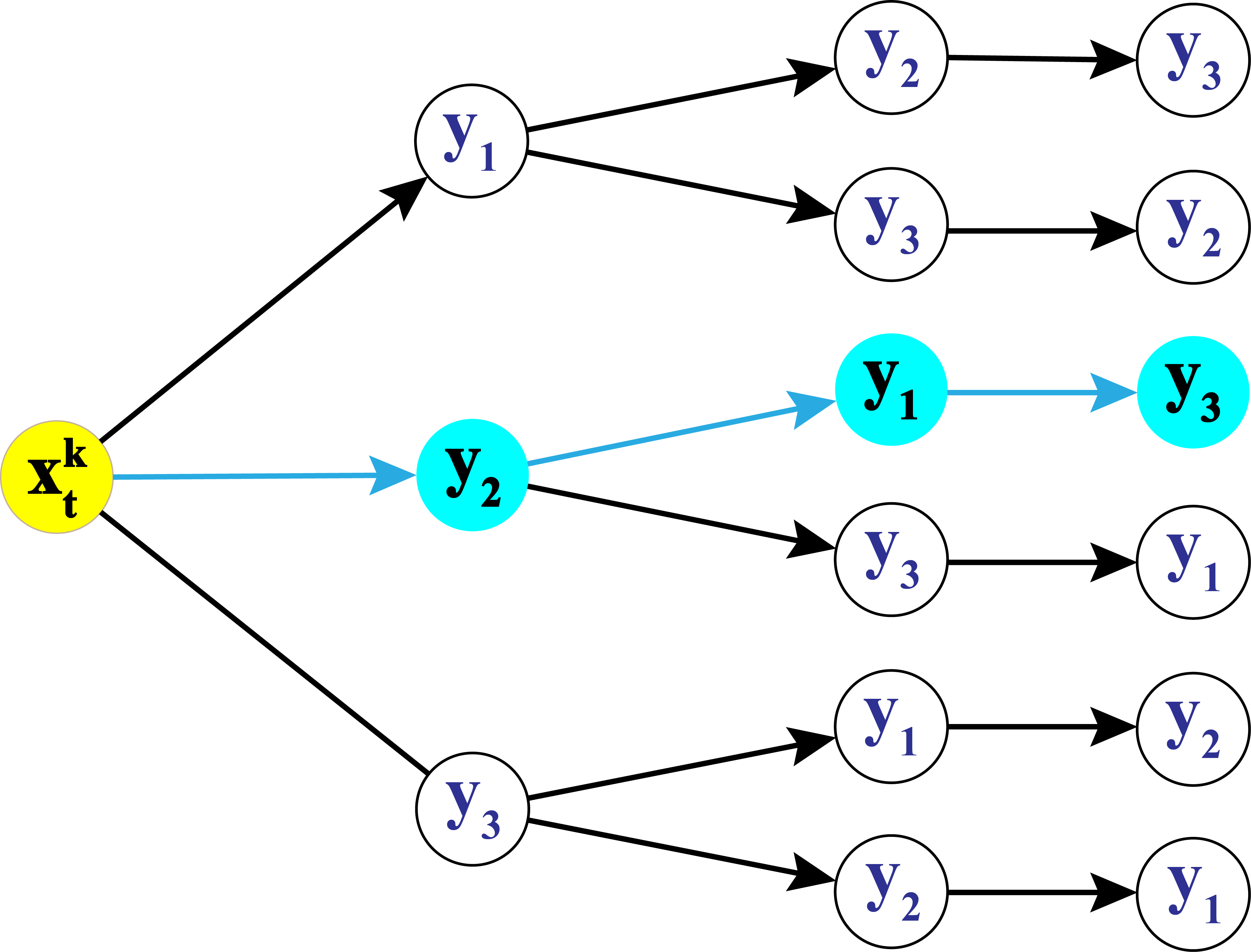}}
\end{center}
\caption{Schematic of the next goal point $^{g}x^k_{t+1}$ determination process: (a) increase the radius of the circle until $h$ numbers of points are found; (b) construct a tree associated with the detected points $y_j$}\label{fig: next x}
\end{figure}
For this purpose, a tree structure is constructed to connect all sample points in the circle starting from $x_t^k$. In this case, the size of the tree becomes ${h}!$, which reflects all possible trajectories as illustrated in Fig. \ref{fig: next x} (b), which has $h=3$. To calculate the sequence of sample points in the circle, a cost function is defined by
\begin{align}
    C^k(i) &= \dfrac{\lVert y_{\sigma_{t+1}}-x_{t}^k\rVert}{n^k_{t}(y_{\sigma_{t+1}})} + 
    \sum_{j=1}^{h-1}\dfrac{\lVert {y_{\sigma_{t+j+1}} - y_{\sigma_{t+j}}}\rVert}{n^k_{t}({y_{\sigma_{t+j+1}}})},\label{eqn: cost function}\\ 
    &\qquad\qquad\qquad\qquad i=1,2,\ldots,{h} !,\nonumber
\end{align}
where $y_{\sigma_{t+j}}$, $j=1,2,\ldots,{h}$, are the sample points located within the circle such that ${\sigma_{t+j-1}\neq\sigma_{t+j}}$, $\forall t\in\mathbb{N}_0$. 

The cost function $C^k(i)$ is defined by \eqref{eqn: cost function} in such a way that the agent follows a short trajectory. Also, the weight $n_t^k(y_j)$ for each sample point $y_j$ in the circle is reflected in \eqref{eqn: cost function} as we expect the agent to visit some points having large weights first.

Given the definition $x^k_{t+1:t+h}:=\{x^k_{t+1},x^k_{t+2},\ldots,x^k_{t+h}\}$, 
the candidate trajectory for the robot $^{c}x^k_{t+1:t+h}(i)$, $i=1,2,\ldots,h!$, is obtained from the tree construction. Then, the $h$-step optimal trajectory $^{g}x^k_{t+1:t+h}$ is determined by
\begin{align}
    ^{g}x^k_{t+1:t+h} = \{^{c}x^k_{t+1:t+h}({i^{\star})}\,\vert\, i^{\star}=\text{argmin}_{i} C^k(i)\}\label{eqn: x_{t+1}}
\end{align}
The agent considers the first point of $^{g}x^k_{t+1:t+h}$ as the next goal point, $^{g}x^k_{t+1}$, and approaches that point to visit. The robot may or may not be able to reach $^{g}x_{t+1}^k$ due to the robot motion constraints.


\subsubsection{Weight update stage}
After the agent $k$ has arrived at a new position $x^k_{t+1}$ (which, again, might be different from $^{g}x_{t+1}^k$), the weight information available to the agent $n^k_{t+1}(y_j)$ associated with each sample point $y_j$ is revised using the following weight update law:
\begin{align}
n_{t+1}^k(y_j) = n_{t}^k(y_j)  - [\pi_{(t+1)j}^{k}]^{\star},\, \forall j\label{eqn: weight update}
\end{align}
where $[\pi_{(t+1)j}^{k}]^{\star}$ is the optimal transport plan, which denotes the weight distribution scheme from $x^k_{t+1}$ to each $y_j$. This optimal transport plan $[\pi_{(t+1)j}^{k}]^{\star}$ is determined from the solution of the following LP problem:

\begin{equation}\label{eqn: LP for weight update}
  \begin{aligned}
    & \underset{\pi^k_{(t+1)j}}{\text{minimize}} & & \sum_{j}\pi^k_{(t+1)j}\Vert x^k_{t+1} - y_j \Vert\\
    & \text{subject to} & & \pi^k_{(t+1)j} \geq 0,\quad \sum_{j=1}^{N}\pi^k_{(t+1)j} = \dfrac{1}{M},\\
	& & & \pi^k_{(t+1)j} \leq \min\left( n^k_{t}(y_j), \frac{1}{M}\right), \, \forall j.
\end{aligned}
\end{equation}

The optimal solution calculated from the solution of LP problem \eqref{eqn: LP for weight update} quantifies how much of the weight $\frac{1}{M}$ for the agent position $x^k_{t+1}$ needs to be distributed to each sample point $y_j$. The first constraint in \eqref{eqn: LP for weight update} ensures that the transport plan $\pi_{(t+1)j}$ is non-negative. The second constraint is included for the law of mass conservation to indicate that the total weight distributed from $x_{t+1}$ to $y_j$ is equal to $\frac{1}{M}$. The last constraint is to guarantee that the transportation plan $\pi_{(t+1)j}$ cannot exceed the fixed predetermined capacity for each point. This constraint is attained by having the smaller value between the  distribution limit $\frac{1}{M}$ and receiving limit $n_{t}(y_j)$. 
After the determination of the optimal transport plan $[\pi_{(t+1)j}^{k}]^{\star}$, the weight for each sample point is updated by \eqref{eqn: weight update}.

The following proposition is developed for the analytic solution of \eqref{eqn: LP for weight update}.
\begin{proposition}\label{prop: analytic solution}
The optimal solution for the LP problem \eqref{eqn: LP for weight update} is obtained by repeating
\begin{align*}
\pi^k_{(t+1)j^{*}} &= 
\min\left(n^k_{t}(y_{j^{*}}), m(x^k_{t+1})\right),\\
\text{ where } 
j^{*} &= \argmin_{j\in\{j\vert n^k_{t}(y_j) > 0\}} \lVert x^k_{t+1} - y_j\rVert \\
m(x^k_{t+1}) &= m(x^k_{t+1}) - \pi^k_{(t+1)j^{*}}\\
n^k_{t}(y_{j^{*}}) &= n^k_{t}(y_{j^{*}}) - \pi^k_{(t+1)j^{*}}
\end{align*}
until $m(x_{t+1})$ becomes zero.
\end{proposition}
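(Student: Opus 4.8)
The plan is to recognize problem \eqref{eqn: LP for weight update} as a single-source transportation problem, which collapses to a continuous (fractional) knapsack problem, and then to establish the optimality of the greedy rule in Proposition \ref{prop: analytic solution} by a standard exchange argument. Because all mass originates from the single point $x^k_{t+1}$, the per-unit transport cost $c_j := \lVert x^k_{t+1} - y_j\rVert$ is a fixed constant for each sink $y_j$, so the objective $\sum_j \pi^k_{(t+1)j}\, c_j$ is separable and the only coupling among the decision variables is the mass-conservation constraint $\sum_j \pi^k_{(t+1)j} = \tfrac{1}{M}$. Writing $u_j := \min(n^k_t(y_j), \tfrac{1}{M})$ for the per-sink capacity, the task is thus to distribute a total mass $\tfrac{1}{M}$ across the sinks so as to minimize total cost, which is accomplished by filling the cheapest (closest) sinks first.

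First I would verify that the procedure returns a feasible point. At every iteration the running mass $m(x^k_{t+1})$ never exceeds $\tfrac{1}{M}$, so each assignment $\pi^k_{(t+1)j^*} = \min(n^k_t(y_{j^*}), m(x^k_{t+1})) \le \min(n^k_t(y_{j^*}), \tfrac{1}{M}) = u_{j^*}$ respects the capacity constraint, while non-negativity is immediate. Since the $\argmin$ is taken over $\{j \mid n^k_t(y_j) > 0\}$ and every fully depleted sink has its residual weight set to zero, no sink is selected twice, the loop terminates in at most $N$ steps, and it halts exactly when $m(x^k_{t+1}) = 0$, at which point the distributed mass equals $\tfrac{1}{M}$. (Feasibility of the LP itself requires $\sum_j u_j \ge \tfrac{1}{M}$, i.e., that enough undepleted weight remains, which holds throughout the exploration until the density is exhausted.)

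The core step is the exchange argument establishing that the greedy ordering is optimal. Suppose $\pi^\star$ is an optimal solution that is not of greedy form; then there exist sinks $a$ and $b$ with $c_a < c_b$, $\pi^\star_b > 0$, and $\pi^\star_a < u_a$, i.e., a closer sink that is not saturated while a farther sink carries positive flow. Shifting $\varepsilon := \min(u_a - \pi^\star_a,\, \pi^\star_b) > 0$ units of mass from $b$ to $a$ keeps all constraints satisfied while changing the objective by $-\varepsilon(c_b - c_a) < 0$, contradicting optimality. Hence in every optimal solution the sinks are filled in nondecreasing order of distance, each closer sink being saturated to its capacity $u_j$ before any strictly farther sink receives mass (save for the single boundary sink treated below) — precisely the flow pattern produced by the greedy procedure. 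Therefore the greedy solution attains the optimal cost and is itself optimal.

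The main obstacle I anticipate is the careful treatment of two boundary phenomena: ties in distance (several $y_j$ equidistant from $x^k_{t+1}$) and the single partially-filled sink at which the cumulative mass first reaches $\tfrac{1}{M}$. In the presence of ties the optimizer need not be unique, so the claim should be read as: the greedy rule returns \emph{an} optimizer and the optimal value is unique; the exchange argument is insensitive to how ties are broken because permuting flow among equal-cost sinks leaves the objective unchanged. The boundary sink requires only the observation that $\min(n^k_t(y_{j^*}), m(x^k_{t+1}))$ correctly caps the last assignment at the residual mass, so that mass conservation holds with equality and no capacity is violated. These are the only points needing delicacy; the remainder is routine.
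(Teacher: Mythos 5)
Your proof is correct and rests on the same idea as the paper's: with a single source $x^k_{t+1}$, the per-unit cost to each sink $y_j$ is a fixed constant, so greedily saturating the nearest sinks with positive weight is optimal. The paper's own proof is a one-sentence assertion of exactly this fact; your feasibility check, exchange argument, and handling of ties and the partially-filled boundary sink simply supply the rigor that the paper omits.
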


\begin{proof}
Given a single point $x^k_{t+1}$ in LP \eqref{eqn: LP for weight update}, the optimal transport plan for the agent is to deliver the maximum permissible weight to the closest points with positive weights in order, as long as the weight $m(x^k_{t+1})$ remains positive.  
\end{proof}

This two-stage strategy is repeated in a receding-horizon fashion, meaning that in every time step, the agents considers only $h$ numbers of sample points within the circular search area to determine where to go during the next goal point determination stage, followed by updating the weight of the sample points. Therefore, the parameter $h$ is given as the horizon length. As the agents cover the given domain and distributes the mass to the sample points, the weight of the sample points decreases and this process continues until all the weights of the sample points are completely depleted.

\subsection{Algorithm}

The formal algorithm for the decentralized exploration strategy is presented in Algorithm \ref{algorithm:1}. Initially, the starting positions of the agents $\{x_0^k\}_{k=1}^{n_a}$, the sample point representation for the spatial reference distribution $\{y_j\}$, the communication range $r_\text{comm}$, the number of robot points $M$, the horizon length $h$, initial search radius $r_0$ and the radius increment $\delta$ are given. At every time step, each agent generates a circle with a radius of $r$, where $r$ keeps increasing with an increment $\delta$ until it finds $h$ numbers of sample points $y_j$ having a positive weight (i.e., $\#\mathcal{R}(x^k_t,r) = h$ and $n^k_t(y_j) > 0$). Next, the agents compute all possible trajectories and corresponding costs $C^k(i)$ using \eqref{eqn: cost function}, determine the next goal point $^{g}x^k_{t+1}$ from \eqref{eqn: x_{t+1}}, and approach their corresponding new positions $x^k_{t+1}$ using a motion controller. After arriving at a new position, each agent distributes the weight to sample points and revises the weight information $n^k_t(y_j)$ from \eqref{eqn: weight update}. 

If an agent detects any other robot within the communication range, then the information sharing occurs between them. At any given time $t$, if an agent $k$ finds another agent $q$ within the communication range $r_{\text{comm.}}$ (i.e., distance $d_{kq} \leq r_{\text{comm.}} $), then the weight information for the sample points $y_j$, $j=1,2,\ldots, N$ is exchanged between agents and they update the weight information using the following rule: 
\begin{align}\label{eqn: decentralized information exchange}
   n^k_t(y_j) =  n^q_t(y_j) = \min(n^k_t(y_j), n^q_t(y_j)),\\\nonumber
   \qquad\qquad\qquad k, q \in \{ 1,2,\ldots,n_a\},  k\neq q
\end{align}
After the information exchange, each agent is able to grasp what sample points are already covered by other agents, leading to collaborative explorations by making them avoid areas already explored by others.

In the decentralized scheme, the total time for the exploration depends on the communication range as well as how frequently each agent communicates with others for information exchanges. If a communication range covers the entire domain, then this will enable all agents to communicate with other agents at every time step, which is technically the centralized exploration. In this scenario, the duration for the decentralized exploration is identical to that for the centralized exploration, computed by $\frac{M}{n_a}$.
On the other hand, given that no agent is able to exchange information with other agents due to the lack of communications, all agents will cover the domain independently, which is the same as the single agent case. In this case, the duration for the exploration is equal to the total number of robot points $M$. From this observation, it is evident that $\frac{M}{n_a}$ and $M$ are the lower and upper bounds of the actual exploration time for the decentralized strategy. An agent will continue its exploration until the weight information available to the agent $n_t^k(y_j)$ becomes zero for all sample points.

\begin{algorithm}[h!]
\caption{Decentralized Multi-Agent Exploration Algorithm}\label{algorithm:1}
\begin{algorithmic}[1]
\State initialize $x_0^k$, $y_j$, $M$, $N$, $r_0$, $r_{\text{comm.}}$, $\delta$, ${h}$, $n_a$, $t\gets 0$
\While{$n_t^k(y_j) >  0,\,  \forall j, \forall k $} 
\For{$k \gets 1$ to $n_a$} 
\If{$d_{kq} \leq r_{\text{comm.}}$}
\State update weight information from \eqref{eqn: decentralized information exchange}
\EndIf
\State initialize circle's radius by $r\gets r_0$
\While {$\#\mathcal{R}(x_t^k,r) \leq {h}$ and $n_{t}^k(y_j)>0$}
\State $r\gets r + \delta$
\EndWhile
\State calculate the cost function $C^k(i)$ associated with all possible candidate trajectories $^{c}x^k_{t+1:t+h}(i)$
\State obtain $^{g}x^k_{t+1}$ from \eqref{eqn: x_{t+1}}
\State update the robot position $x^k_{t}$ with the given robot motion controller and the goal position $^{g}x^k_{t+1}$
\State update weights $n_t^k(y_j)$ by \eqref{eqn: weight update}
\EndFor
\State $t\gets t+1$
\EndWhile
\end{algorithmic}
\end{algorithm}


\subsection{Performance Measure using Wasserstein Distance }
For large $M$ and $N$, it is difficult to compute the actual Wasserstein distance as this becomes computationally intractable. 
To measure the performance of the decentralized exploration scheme without any computational issues, the upper bound of the Wasserstein distance is developed, which can be utilized as a performance metric.

In the absence of an supervisory agent/computer, the agents do not have access to the weight information from other agents and hence, each agent needs to calculate its own performance. The set of neighboring agents within the communication range for the agent $k$ is denoted by $\mathcal{N}_k$. The Wasserstein distance for the agent $k$ is then computed by $ W^k(t) = \minimize_{\pi_{ij}^{k}}\sum_{k\in\mathcal{N}_k}\sum_{i=1}^{M}\sum_{j=1}^{N}\pi_{ij}^{k}\Vert x_{i}^{k} - y_j \Vert$ and the upper bound for this value is developed in the following theorem.
\begin{theorem}\label{thm: Decentral_W_UB}
Consider the optimization problem \eqref{eqn: LP} under Assumption \ref{assump: remaining weight} with robot points $\{x_i\}_{i=1}^{t}$ determined by the proposed efficient exploration algorithm. Then, at any time $t\in\mathbb{N}_0$, the Wasserstein distance $W^k(t)$ for the agent $k$ is upper bounded by
\begin{align}
    W^{k}(t) \leq \sum_{k\in\mathcal{N}_k}\sum_{i=1}^{t}\tilde{W}^k(i) + \sum_{k\in\mathcal{N}_k}\sum_{j=1}^{N}n_{t}^{k}(y_j) \cdot \lVert x_{t}^{k} - y_j\rVert,\label{eqn: Decentral_W_UB}
\end{align}
where $n_{t}^{k}(y_j)$ is the current weight for each $y_j$ after the weight update law \eqref{eqn: weight update} and $\tilde{W}^{k}(i):=\minimize_{\pi_{ij}^k}\sum_{j=1}^{N}\pi_{ij}^k\Vert x_i^{k} - y_j\Vert$ subject to the same constraints in \eqref{eqn: LP for weight update}.
\end{theorem}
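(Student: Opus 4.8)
The plan is to prove \eqref{eqn: Decentral_W_UB} by exhibiting a single \emph{feasible} (generally suboptimal) transport plan for the linear program that defines $W^k(t)$ and then invoking the fact that $W^k(t)$ is the minimum of the LP objective over all feasible couplings. Since any feasible plan furnishes an upper bound on the minimum, it suffices to construct one whose total cost equals the right-hand side of \eqref{eqn: Decentral_W_UB}. I would first carry out the argument for a single agent, suppressing the outer sum $\sum_{k\in\mathcal{N}_k}$, and restore it at the end by superposition, since the objective defining $W^k(t)$ is additive across the agents in $\mathcal{N}_k$ and each summand can be bounded independently by the same construction.

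For the single-agent construction I would invoke Assumption~\ref{assump: remaining weight} to split the robot-side mass into two pieces: the determined points $x_1^k,\dots,x_t^k$ each carry weight $\tfrac1M$, while all undetermined future points are accumulated at the current position $x_t^k$ and contribute the residual mass $\tfrac{M-t}{M}$. I then build the coupling in two matching pieces. For each step $i=1,\dots,t$, I route the mass $\tfrac1M$ emitted from $x_i^k$ to the sample points using exactly the optimal weight-update plan $[\pi_{ij}^k]^\star$ of the step-$i$ LP \eqref{eqn: LP for weight update}; by the definition of $\tilde W^k(i)$ stated in the theorem, these partial routings cost precisely $\sum_{i=1}^{t}\tilde W^k(i)$. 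For the residual mass sitting on $x_t^k$, I route $n_t^k(y_j)$ units directly from $x_t^k$ to each $y_j$, a piece costing $\sum_{j=1}^{N} n_t^k(y_j)\,\lVert x_t^k-y_j\rVert$. Summing the two contributions already yields the single-agent form of the claimed bound.

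The key verification, and the step I expect to be the main obstacle, is confirming that this really is a valid coupling, i.e.\ that both marginals are met exactly. On the robot side the emitted masses agree with the Assumption~\ref{assump: remaining weight} weights by construction. On the reference side, the total mass delivered to each $y_j$ is $\sum_{i=1}^{t}[\pi_{ij}^k]^\star + n_t^k(y_j)$; telescoping the weight-update law \eqref{eqn: weight update} gives $\sum_{i=1}^{t}[\pi_{ij}^k]^\star = n_0^k(y_j)-n_t^k(y_j)=\tfrac1N-n_t^k(y_j)$, so each $y_j$ receives exactly its reference mass $\tfrac1N$. The delicate point is that the residual routing must be well-defined, which requires the residual robot mass $\tfrac{M-t}{M}$ to equal the leftover reference mass $\sum_j n_t^k(y_j)$; this is precisely where the mass-conservation constraint $\sum_j \pi^k_{(t+1)j}=\tfrac1M$ of \eqref{eqn: LP for weight update} enters, since it forces each update to remove exactly $\tfrac1M$ of total weight and hence $\sum_j n_t^k(y_j)=1-\tfrac tM=\tfrac{M-t}{M}$. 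Once this balance is established, the constructed plan is feasible, its cost equals the single-agent right-hand side, and reinstating the sum over $\mathcal{N}_k$ completes the bound in \eqref{eqn: Decentral_W_UB}.
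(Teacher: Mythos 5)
Your proposal is correct and follows essentially the same route as the paper's proof: both bound $W^k(t)$ by the cost of the specific transport plan that sends mass $\tfrac{1}{M}$ from each past point $x_i^k$ according to the step-$i$ optimal plan of \eqref{eqn: LP for weight update} and routes the accumulated residual mass at $x_t^k$ directly to the $y_j$ in proportion to $n_t^k(y_j)$. The only difference is presentational: the paper writes this as a chain of inequalities on restricted minimizations, invoking Assumption \ref{assump: remaining weight} and mass conservation at the final step, whereas you exhibit the feasible coupling explicitly and verify the marginals by telescoping \eqref{eqn: weight update} --- a somewhat more careful rendering of the same argument.
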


\begin{proof}
At any time $t\in\mathbb{N}_0$, the current and previous robot points of agent $k$, $\{x^k_i\}_{i=1}^{t}$ as well as the remaining weights $n^k_t(y_j)$, $j=1,2,\ldots, N$, are given by the proposed algorithm. 
Under Assumption \ref{assump: remaining weight},  the future robot points are all accumulated at $x^k_{t}$.
Then, the Wasserstein distance at any time $t$ (constraints are omitted here) is upper bounded by
\begin{align*}
    W^k(t) &= \minimize_{\pi_{ij}^{k}}\sum_{k\in\mathcal{N}_k}\sum_{i=1}^{M}\sum_{j=1}^{N}\pi_{ij}^{k}\Vert x_{i}^{k} - y_j \Vert\\
    &\leq \minimize_{\pi_{ij}^{k}}\sum_{k\in\mathcal{N}_k}\sum_{i=1}^{t}\sum_{j=1}^{N}\pi_{ij}^{k}\Vert x_i^{k} - y_j\Vert + \\
    &\qquad\qquad\minimize_{\pi_{ij}^{k}}\sum_{k\in\mathcal{N}_k}\sum_{i=t+1}^{M}\sum_{j=1}^{N}\pi_{ij}^{k} \lVert x_{i}^{k} - y_j\rVert\\
    &\leq \sum_{k\in\mathcal{N}_k}\sum_{i=1}^{t}\underbrace{\left(\minimize_{\pi_{ij}^k}\sum_{j=1}^{N}\pi_{ij}^k\Vert x_i - y_j\Vert\right)}_{=\tilde{W}^{k}(i)} + \\
    &\qquad\qquad\sum_{k\in\mathcal{N}_k}\sum_{j=1}^{N}n_{t}^{k}(y_j) \cdot \lVert x_{t}^{k} - y_j\rVert,
\end{align*}
where the last inequality holds by Assumption \ref{assump: remaining weight} and the mass conservation law.
\end{proof}

To determine the upper bound of the Wasserstein distance at any time $t\in\mathbb{N}_0$ from \eqref{eqn: Decentral_W_UB}, it only requires computing $\tilde{W}^k(t)$, followed by the computation of the second term in \eqref{eqn: Decentral_W_UB} which is obtained from the weight update \eqref{eqn: weight update}.
$\tilde{W}(t)$ can be obtained analytically by Proposition \ref{prop: analytic solution} and the upper bound is computed recursively as the values for $\tilde{W}^k(i)$, $i=1,2,\ldots,t-1$, are already calculated, and thus known from the previous time step.
Therefore, this upper bound can be calculated by each agent without any computational issues, enabling real-time monitoring for the efficiency measure.

\section{\uppercase{Simulations}}
To validate the technical soundness of the proposed decentralized collaborative multi-agent exploration method, simulations are performed and the simulation results are presented in Fig. \ref{fig: decentralized}. For the simulation, the first order robot dynamics is considered, which is applicable to various robot platforms, such as ground mobile robots, multi-rotor UAVs, etc. The first order robot dynamics for continuous time is given as:
\begin{align}\label{eqn: first order continuous}
    \dot{x}(t) = \frac{dx(t)}{dt} = u(t)
\end{align}
where $x(t)\in\mathbb{R}^2$ is the continuous planar position of the agent and $u(t)\in\mathbb{R}^2$ is the instantaneous velocity as the control input for the first order dynamics.  

The counterpart of \eqref{eqn: first order continuous} for discrete-time case with the control input $u^k$ for an agent $k$ can be written as:
\begin{align}
    x^k_{t+1} = x^k_t + u^k\Delta t = x^k_t + u_{max}\frac{^gx^k_{t+1} - x^k_t}{||^gx^k_{t+1} - x^k_t||}\Delta t
\end{align}\label{eqn: robot motion control}
where $x^k_{t} = [\mathsf{x}^k_t, \mathsf{y}^k_t]^{T}$ is the agent position with $\mathsf{x}^k_t,\mathsf{y}^k_t\in\mathbb{R}$, $u_{max}$ is the maximum attainable speed of the agent, $\Delta t$ is the time interval for the discretization, and $^gx^k_{t+1}$ is the goal point for the next time step determined by \eqref{eqn: x_{t+1}} in the next goal point determination stage. 


In Fig. \ref{fig: decentralized}, the sample point representation of the given reference distribution is illustrated by the green dots. The spatial distribution considered for the simulation is given as a mixture of Gaussian with four modal Gaussian components as follows:
{\small
\begin{align*}
    \mu_1 &=[300, 600]^{T}, \quad \mu_2=[720, 275]^T, \\\quad \mu_3 &= [1300, 600]^T, \mu_4 =[1000, 1500]^{T},\\
    \Sigma_1 &= \begin{bmatrix}
       6000  &   0\\
          0  &   4500
    \end{bmatrix}, \quad
    \Sigma_2 = \begin{bmatrix}
        5250 &   0\\
           0 &   4750
    \end{bmatrix}, \\
    \Sigma_3 &= \begin{bmatrix}
        3250 &   0\\
           0 &   5000
    \end{bmatrix}, \quad
    \Sigma_4 = \begin{bmatrix}
        8000 &  0\\
           0 &  3500
    \end{bmatrix}
\end{align*}
}
Other simulation parameters are:
\begin{itemize}
  \item Domain size: $1500 \times 1800$
  \item Number of agents: $n_a = 3$
  \item Maximum allowable number of robot steps: $M=3000$ (for each agent)
  \item Number of sample points for the multi-modal Gaussian distribution: $N = 1600$
  \item Initial robot positions;\\
  $x_0 = [1000, 200]^{T}, [400, 900]^{T}, [1400, 400]^T$
  \item Maximum velocity of the robot: $10$
  \item Time interval for discretization: $\Delta t = 1$
  \item Robot communication range: $r_{\text{comm.}}=100$

\end{itemize}

\begin{figure*}[tbph!]
\centering
\subfloat[$t=0$]{\includegraphics[scale=0.36]{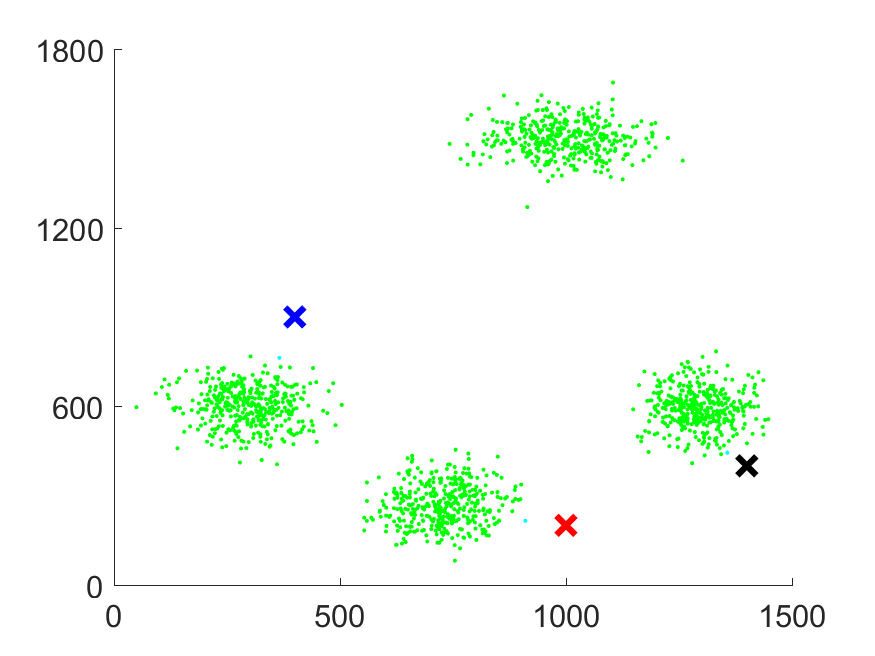}}
\subfloat[$t=600$]{\includegraphics[scale=0.36]{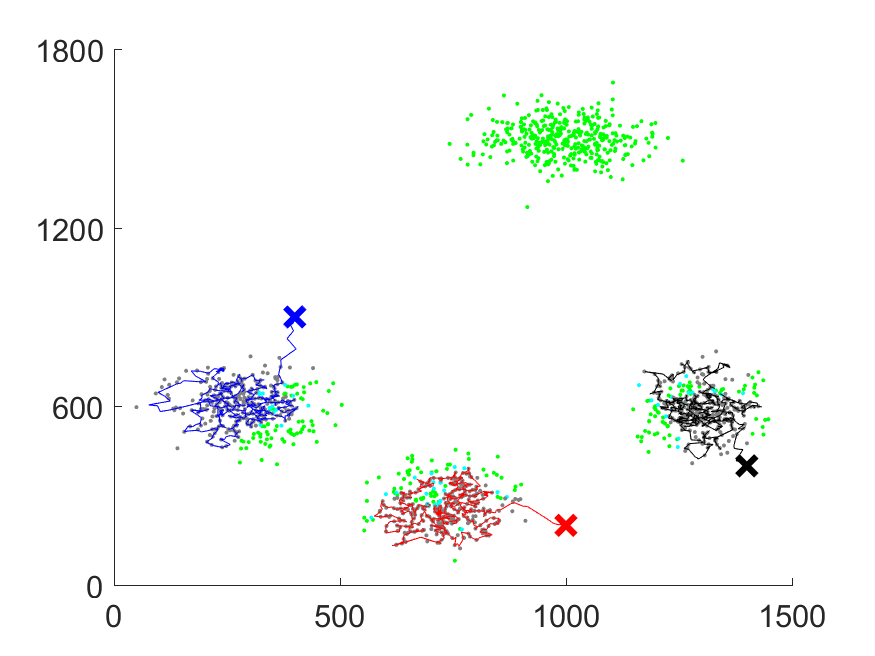}}
\subfloat[$t=800$]{\includegraphics[scale=0.36]{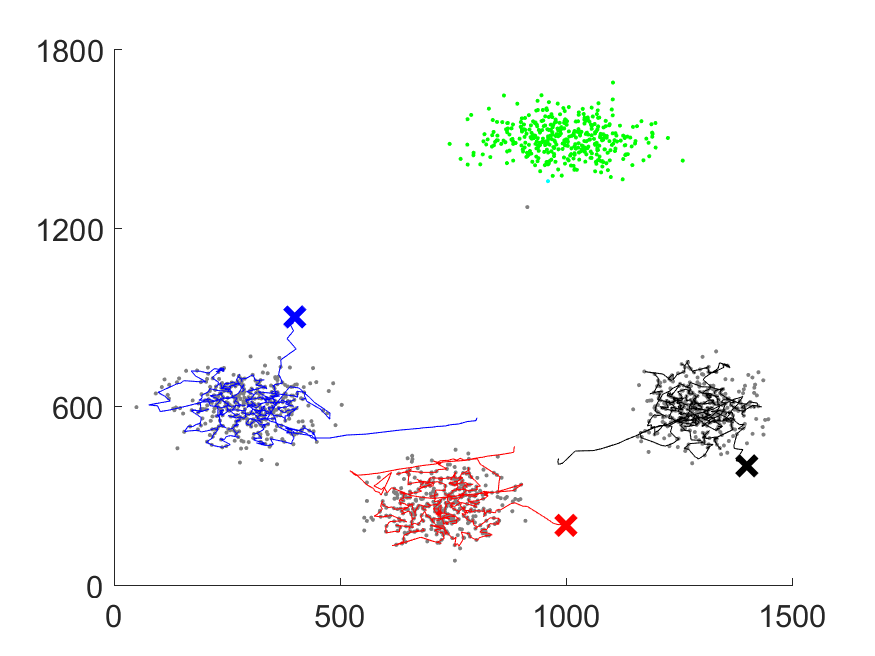}}\\
\subfloat[$t=1000$]{\includegraphics[scale=0.36]{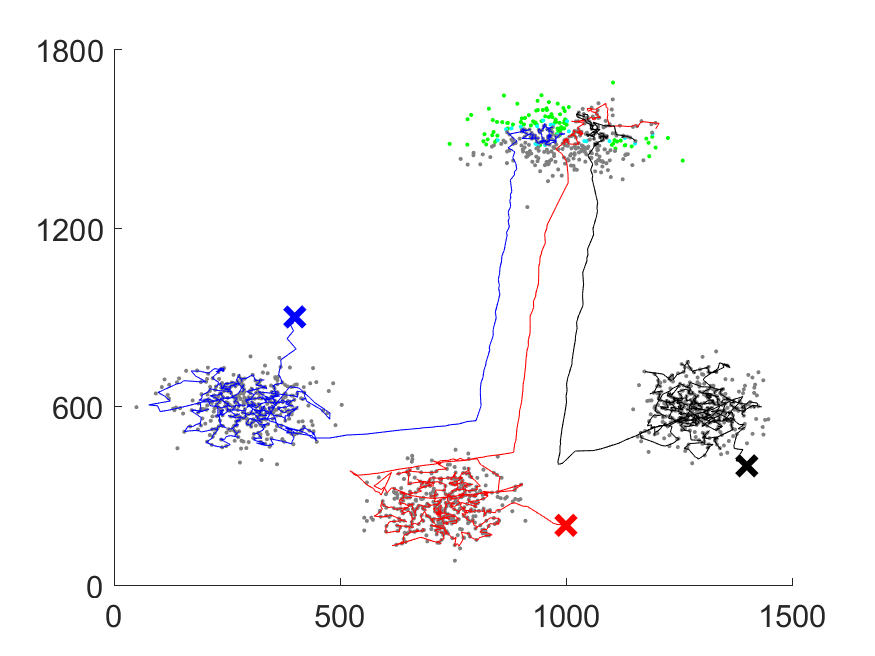}}
\subfloat[$t=1092$]{\includegraphics[scale=0.36]{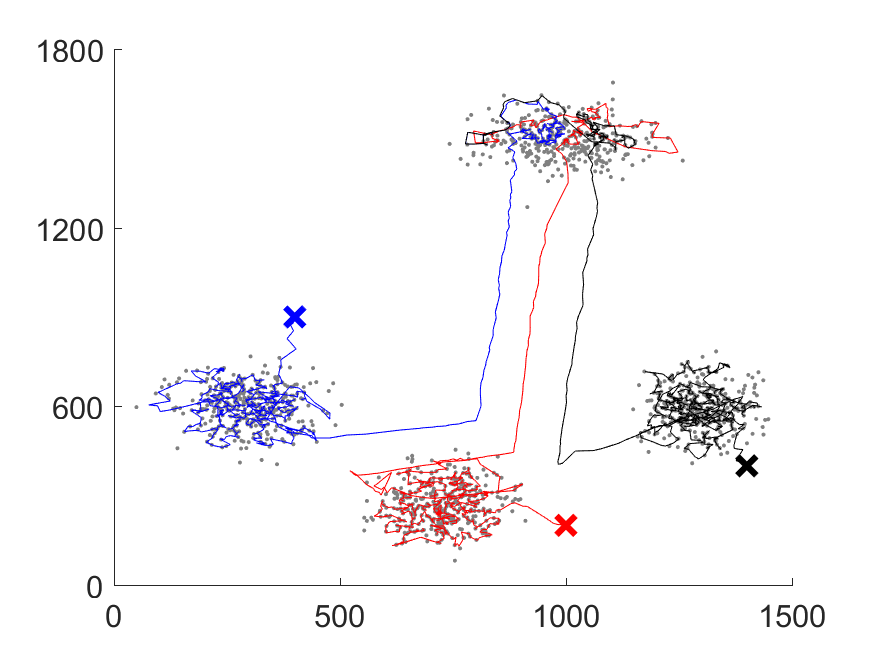}}
\subfloat[$W_{UB}$]{\includegraphics[scale=0.24]{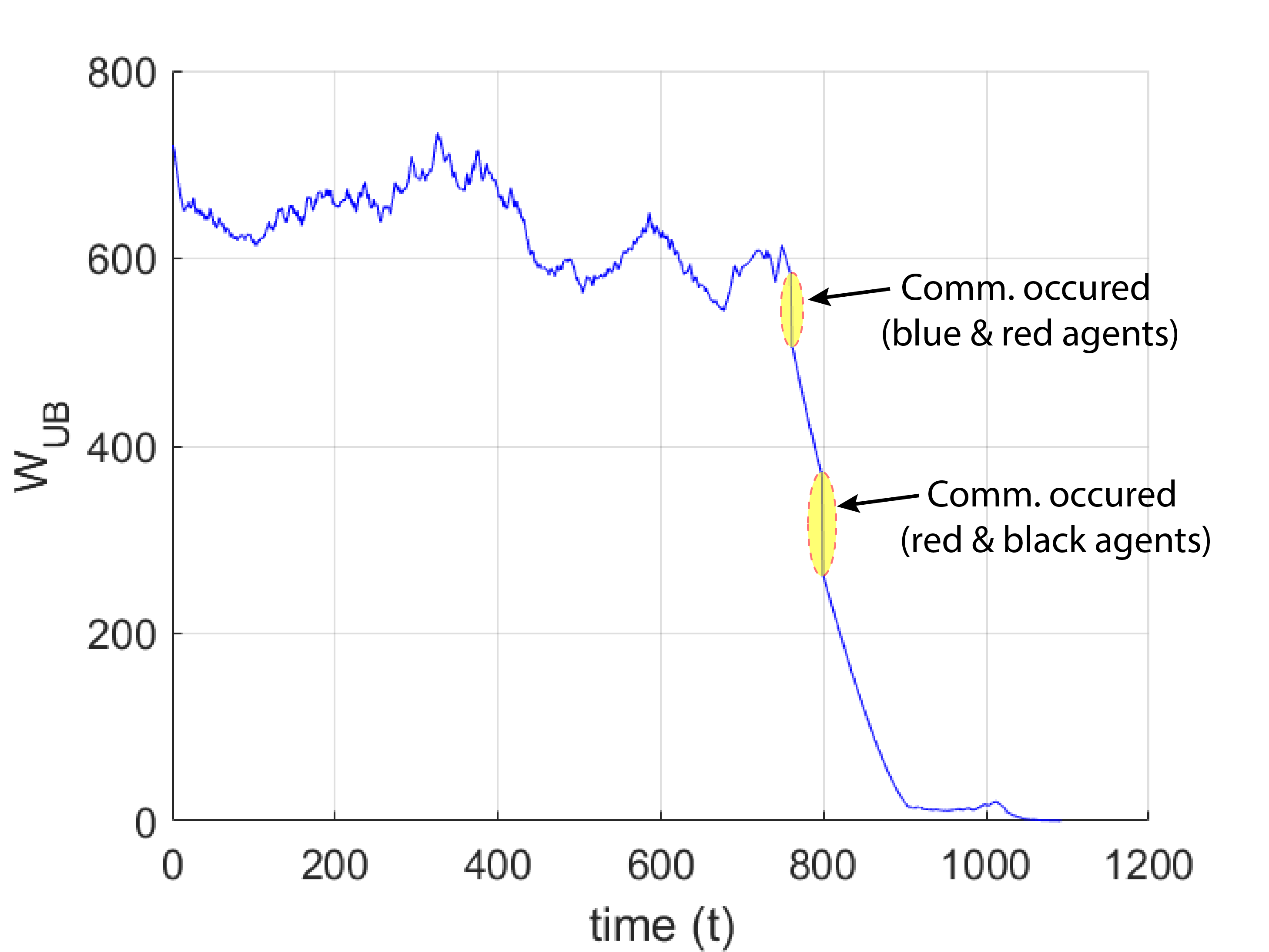}}
\caption{Snapshots of decentralized multi-agent exploration for the given spatial distribution (a) - (e); and the upper bound of the Wasserstein distance for the red agent (f)}\label{fig: decentralized}
\end{figure*} 

In Fig. \ref{fig: decentralized} (a), the initial positions of the agents are represented by blue, red and black crosses and the sample points with initial evenly distributed weight are shown in green dots. With time, the sample points lose weights due to the mass distribution by the agents. The sample points with depleted weight are shown as grey dots in later figures in Fig. \ref{fig: decentralized}. 

Initially, the agents are not in communication because of their starting positions, and therefore, they start the exploration task as completely independent agents. 
The left, middle and right distributions located at the lower part of the domain are the closest distributions from the initial position of the blue, red and the black agents, respectively. As a result, the agents approach and survey these regions separately. Fig. \ref{fig: decentralized} (b)-(c) depict the exploration of these areas by the agents following the two-stage approach.

Once the exploration of their respective areas is complete (Fig. \ref{fig: decentralized} (c)), each agent approaches the next closest unexplored region. 
The red agent aims at exploring the region that is covered the by the blue agent and starts moving towards it. At $t = 761$, the red and blue agents find themselves within the communication range, share their weight information, re-evaluate their decision on next area to visit, and approach that area. As they are not in communication with the black agent, they approach the lower right distribution for exploration and similarly, the black agent moves toward the lower mid distribution. At $ t = 799$, all three agents communicate (blue and black agents communicate through the red agent) and realize that all three distributions in the lower part of the domain are explored by one of them. As a result, the agents plan to visit the upper mid distribution. These events of information sharing and decision changing are captured in Fig. \ref{fig: decentralized} (c) and (d). 

During the exploration of the last distribution, the agents are located within the communication range most of the time, resulting in an efficient exploration by the agents as shown in Fig. \ref{fig: decentralized} (e), which illustrates that the trajectories of three agents do not overlap in most cases. Therefore, the agents work similarly to the centralized exploration scheme. 

The simulation termination time is set up as the largest time among each agent's time to completely deplete the weight of sample points. In this simulation, the exploration duration is $t = 1092$, which is greater than the lower limit for the exploration time $\frac{M}{n_a} = 1000$, but less than the maximum individual robot steps $M = 3000$. Therefore, it can be concluded that the three-agent decentralized collaborative system effectively explored the domain by reducing the time almost by one-third ($1092/3000$).

Fig. \ref{fig: decentralized} (f) provides the variation of the upper bound of the Wasserstein distance with time for the red agent as the agents cover the given domain. During the separate exploration of lower mid region by the agent, $W_{UB}$ decreases slowly. A sharp decrease of $W_{UB}$ is observed from $t = 749$ to $t=760$ due to the red agent's movement towards the lower left region. The upper bound $W_UB$ drops suddenly at $t = 761$ and $t=799$, when the red agent communicates with the blue and the black agents and exchange the weight information. From this point forward, $W_{UB}$ keeps decreasing and reaches a very small value of $0.1609$ at the final time step $t = 1092$. This quantified value using the Wasserstein distance implies that the multi-robot decentralized system is able to attain the efficient exploration. 


\section{\uppercase{Conclusion}}
This paper proposed an efficient, decentralized, and collaborative multi-agent exploration plan based on the OT theory to cover a domain associated with a given reference distribution. The reference distribution is represented by an ensemble and the agents perform an exploration mission in a receding-horizon manner, following a two stage approach. The information exchange occurs between any agents within the communication range (decentralized control), enabling them to make a decision efficiently and collaboratively based on the past coverage by other agents.
The upper bound of the Wasserstein distance was proposed as a metric to quantify the efficiency of the exploration plan in a computationally feasible way. The formal algorithm to realize the decentralized multi-agent exploration was provided. Finally, simulations were performed and results were presented to validate the proposed algorithm.



\bibliographystyle{ieeetr}
\bibliography{OT_multi_robot_exp_journal}

\end{document}